\newtheorem{theorem}{Theorem}
\newtheorem{proposition}{Proposition}
\newtheorem{remark}{Remark}
\DeclareMathOperator*{\var}{\rm{Var}} 
\newcommand{\Prob}{\mathds{P}}
\newcommand{\E}{\mathds{E}}
\newcommand{\EI}{\mathrm{Ei}}
\begin{document}

\title{Exponential Integral Solutions for Fixation Time\\ in Wright-Fisher Model With Selection}
\author[1]{Vincent Runge\thanks{Email: vincent.runge@univ-evry.fr}}
\author[1,2]{Arnaud Liehrmann}
\author[1]{Pauline Spinga}
\affil[1]{\small Universit\'e Paris-Saclay, CNRS, Univ Evry, Laboratoire de Math\'ematiques et Mod\'elisation d'Evry, 91037, Evry-Courcouronnes, France.}
\affil[2]{\small Universit\'e Paris-Saclay, CNRS, INRAE, Universit\'e  Evry, Institute of Plant Sciences Paris-Saclay (IPS2), 91190, Gif sur Yvette, France.}


\date{}
\maketitle

\begin{abstract}
In this work we derive new analytic expressions for fixation time in Wright-Fisher model with selection. The three standard cases for fixation are considered: fixation to zero, to one or both. Second order differential equations for fixation time are obtained by a simplified approach using only the law of total probability and Taylor expansions. The obtained solutions are given by a combination of exponential integral functions with elementary functions. We then state approximate formulas involving only elementary functions valid for small selection effects. The quality of our results are explored throughout an extensive simulation study. We show that our results approximate the discrete problem very accurately even for small population size (a few hundreds) and large selection coefficients.
\end{abstract}

Keywords: Wright-Fisher model, fixation time, selection effect, exponential integral functions

\section{Introduction}

The Wright-Fisher (WF) model is one of the simplest model of genetic drift in population genetics. It has been developed independently by Sewall Wright \cite{wright1931evolution} and R.A Fisher \cite{sir1999genetical} in the 1930s. This model describes the dynamic of allele transmission over time for haploid or diploid individuals based on strict assumptions: non-overlapping generations, constant population size over time, random (panmictic) mating.

One of the first applications to the WF model was proposed by Buri et al$.$ \cite{buri1956gene} who studied the distribution of gene frequency under selection due to a mutation in Drosophilia Melanogaster populations. This idea was undertaken later by Grifford et al$.$ \cite{gifford2013model} who studied the probability of fixation in mutated strains of fungus Aspergillus nidulans impacting their viability. More recent examples among many others include the analysis of genomics data \cite{terhorst2015, fares2015experimental, tataru2015inference, ferrer2016, taus2017, jonas2016}, reflecting the relevance of such a simple model to this day. 

In addition to genetic drift, the WF model can account for the effects of other evolutionary forces such as mutation \cite{Gutenkunst2009, Luki2012,Steinrcken2014}, selection \cite{Mathieson2013, Gompert2015,Bollback2008,Malaspinas2012,Steinrcken2014,Vitalis2014,Lacerda2014}, the effect of demographic forces including the variation in population size through time and migration connecting populations \cite{Mathieson2013, Gutenkunst2009, Luki2012,Vitalis2014}.

In population genetics, one focuses on different quantities of interest such as heterozygosity or fixation probabilities \cite{shafiey2017exact}. Here we focus on the average time until fixation (or simply fixation time) in WF model. Many solutions for fixation time have been found using the diffusion approximation method \cite{kimura1962probability}. The expression of the fixation time for WF model with no additional effect is well-known \cite{etheridgediffusion, tran2013introduction}. Solutions for fixation of a rare mutant gene (excluding the case of potential loss) by random drift and with selection \cite{kimura1969average} or with an additional mutation rate \cite{kimura1980average} have also been derived. Moreover, the variance for fixation with no loss has been found in \cite{narain1970note} and \cite{maruyama1972average} with a selection coefficient.

In this work we intended to contribute to the study of the fixation time for WF model with selection by stating new analytic expressions. We analyzed the three possible fixation times: fixation to zero, to one or both. To the best of our knowledge, only some partial results have been described in literature: fixation to one for rare mutant for all initial probabilities in complex integral form \cite{kimura1969average} and for null initial probability (only) \cite{koopmann2017fisher} in exponential integral form. No author has yet highlighted the use of the exponential integral function for describing its complete solution. 

This paper is organized as follows. In Section~\ref{sec:2} we present the WF model with selection effect and the derivation of its associated ordinary differential equations. In Section~\ref{sec:3} we find expressions for fixation time using exponential integral (EI) functions. In Section~\ref{sec:4} we study approximation results to bypass the use of EI functions for small selection effect. We eventually provide an extensive simulation study in Section~\ref{sec:5} to explore the validity range of our theoretical results with respect to population size and selection effect strength. 

\section{The Wright-Fisher Model With Selection}
\label{sec:2}

\subsection{Context and Notations}

We consider an initial population size of $N$ haploid individuals and study a single locus with two possible alleles: $A$ and $a$. This model is equivalent to the study of diploid individuals who can either be homozygote ($AA$ - $aa$) or heterozygote ($Aa$ - $aA$), considering that $A$ and $a$ are neither dominant nor recessive. In that case, the dominance parameter of heterozygous individuals is $h = 1/2$ \cite[chapter 5.3]{etheridge2011some}. 

Notice that, in the literature \cite{braude2009understanding, wang2016prediction}, while studying genetic drift, an effective population size is usually determined. Denoted $N_e$, it is defined as the size of an ideal population that would have the same rate of loss of genetic diversity - or that would lose heterozygosity at an equal rate - than the real studied population. The determination of $N_e$ is not the subject of this article, we consider that our population size is equal to the effective population size ($N~=~N_e$).

We introduce a selection effect $s$ that affects favorably viability of carriers of allele $A$: their chance of survival is increased by a factor $1 + s$. There are many possibilities to introduce selection effects \cite{shafiey2017exact}. We chose to keep $s$ constant for all generations and we also define coefficient $\alpha = 2sN$ normalized by population size as usually done in the literature \cite{kimura1969average, tataru2017statistical, bollback2008estimation, etheridge2011some} (estimation of $s$ from data in \cite{bollback2008estimation}).


Let $x_t$ be the number of alleles A in generation $t$, $x_t \in \{0,\dots, N\}$, and $y_t = x_t/N$ its proportion\footnote{We chose notation $y_t$ instead of a more natural $p_t$ to avoid confusion with initial allele A frequency, further denoted $p$.}: they are realizations of random variables $X_t$ and $Y_t$ respectively. The frequency $y_{t+1}$ of alleles A obtained at time $t+1$ depends on frequency $y_{t}$ at time $t$, but is moreover skewed by the selection effect, that is, we define a fictional proportion:
 $$\pi_{t} = \frac{x_t+s x_t}{ N + s x_t} = \frac{(1+s)y_t}{1+ sy_t}\,,$$
 where the population of allele A is increased by an imaginary quantity $s x_t$. The distribution of alleles in the new generation is then generated by the binomial distribution $X_{t+1} \backsim \mathcal{B}(N,\pi_{t})$. The generation $t+1$ represents the offspring of the generation $t$ obtained by random sampling with replacement. We also define the variation of the proportion $Y_t$ over time by $\Delta P_t = Y_{t+1} - Y_{t}$. The transition probabilities are defined as 
$$ p_{ij} = \Prob(X_{t+1} = j\mid X_t = i) =\binom{N}{j} \Big( \frac{(1+s)i}{N+si}\Big)^{j} \Big(1- \frac{(1+s)i}{N+si}\Big)^{N-j}\,,$$
for all integers $t$, integers $i$ in $\left \{ 1,..., N-1\right \}$ and $j$ in $\left \{ 0,..., N\right \}$.

\begin{remark}
$\left( X_{t} \right)_{t \ge 0}$ is an absorbing Markov chain~\cite{kemeny1976laurie} where $p_{ij}$ is the probability of transition from state $i$ to $j$ and $X_t = 0$ and $X_t = N$ are the absorbing states.
\end{remark}

\subsection{Fixation Time}

The fixation time is defined as the time $T=T_p$ needed by the Markov chain $(X_t)_{t \ge 0}$ to reach $0$ or $N$ with initial population $X_0 = pN$ in $\{0,\dots, N\}$, that is:
$$T_p = \min_{t \ge 0}\{t\,\,\hbox{such that}\,\,  X_t = 0 \,\, \hbox{or} \,\,X_t = N\,\,\hbox{with}\,\, X_0 =   pN\}\,.$$ 
This is a random variable returning the number of iterations needed to get an homogeneous population. It is parametrized by its initial allele A frequency $p$. Our goal is to search for analytic expressions for its expectation when we consider the limit case for infinite population size $N$, leading to a continuous variable $p$ in $[0,1]$. We then re-define $T_p$ through the frequencies $Y_t$ to remove size-dependency: 
$$T_p = \min_{t \ge 0}\{t\,\,\hbox{such that}\,\,  Y_t = 0 \,\, \hbox{or} \,\,Y_t = 1\,\,\hbox{with}\,\, Y_0 =   p\}\,.$$
We decompose its expected value by the law of total expectation and introduce new notations as follows:
$$
\begin{aligned}
  \E[T_p] &= \Prob (Y_{T_p} = 0) \E[T_p \mid Y_{T_p} = 0]+ \Prob (Y_{T_p} = 1) \E[T_p \mid Y_{T_p} = 1]\,,\\
  m(p) &= (1-u(p)) m_0(p)+ u(p)m_1(p)\,,\\
\end{aligned}
$$
with expectation function $p \mapsto m(p) = \E[T_p]$, fixation probability function $p \mapsto u(p) = \Prob (Y_{T_p} = 1)$, expected fixation time to zero and one defined as $p \mapsto m_1(p) = \E[T_p \mid Y_{T_p} = 1]$ and $p \mapsto m_0(p) = \E[T_p \mid Y_{T_p} = 0]$, respectively. Function $m_1(\cdot)$ is often called the fixation time of a mutant gene excluding the case of potential loss. We also define $M_1(p) = u(p)m_1(p)$ and $M_0(p) = (1-u(p))m_0(p)$.
These quantities can be described in terms of infinite sums:
$$
 \left\{
    \begin{aligned}
  u(p) &= \Prob (Y_{T_p} = 1)=\sum_{t=0}^{\infty} \Prob(T_p = t \,,\, Y_{T_p} = 1)\,,\\
  M_0(p) &=\sum_{t=0}^{\infty} t \Prob(T_p=t \,,\, Y_{T_p} = 0)\,,\\
  M_1(p) &=\sum_{t=0}^{\infty} t \Prob(T_p=t \,,\, Y_{T_p} = 1)\,,\\
  m(p) &= \E[T_p]=\sum_{t=0}^{\infty} t \Prob(T_p=t) = M_0(p) + M_1(p)\,,\\
\end{aligned}
\right.
$$
and we can now derive equations binding neighboring values of each function to itself using the transition probability operator $\Prob_{p \rightarrow p + \Delta p} = \Prob(Y_{1}=p + \Delta p  \mid Y_{0}=p)$, which is the density probability of the transition in one time step from allele probability $p$ to allele probability $p+\Delta p$.

\begin{proposition}
\label{masterEquations}
Functions $u(\cdot)$, $M_0(\cdot)$, $M_1(\cdot)$ and $m(\cdot)$ satisfy the following equations for initial probability $p$ (sometimes called master equations) :
$$
 \left\{
    \begin{aligned}
   u(p) &=  \int_{\Delta p}\Prob_{p \rightarrow p + \Delta p} u(p + \Delta p)d(\Delta p) \,,\\
    M_0(p) &= 1-u(p) + \int_{\Delta p} \Prob_{p \rightarrow p + \Delta p}M_0(p + \Delta p)d(\Delta p) \,,\\
      M_1(p) &= u(p) + \int_{\Delta p} \Prob_{p \rightarrow p + \Delta p}M_1(p + \Delta p)d(\Delta p) \,,\\
      m(p) &= 1 + \int_{\Delta p} \Prob_{p \rightarrow p + \Delta p}m(p + \Delta p)d(\Delta p) \,,\\
\end{aligned}
\right.
$$
with integration performed with respect to the Lebesgue measure.
\end{proposition}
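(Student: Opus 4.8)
The plan is to obtain all four identities from a single first-transition analysis: condition the chain started at $Y_0 = p$ on its state $Y_1 = p + \Delta p$ after one generation, and invoke time-homogeneity together with the Markov property. For an interior starting point $p \in (0,1)$ the chain is not yet absorbed, so $T_p \ge 1$ and $T_p = 1 + T'$, where $T'$ and the absorbing state $Y'_{T'}$ are distributed exactly as the fixation data of an independent copy of the chain started from $p + \Delta p$. At the level of the joint laws appearing in the infinite-sum definitions this reads, for every $t \ge 1$ and $a \in \{0,1\}$,
\begin{equation*}
\Prob(T_p = t\,,\, Y_{T_p}=a) = \int_{\Delta p} \Prob_{p \rightarrow p + \Delta p}\,\Prob(T_{p+\Delta p}=t-1\,,\,Y_{T_{p+\Delta p}}=a)\,d(\Delta p)\,,
\end{equation*}
while $\Prob(T_p = 0\,,\,Y_{T_p}=a)=0$ since $p$ is interior. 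I would take this identity as the single computational engine for all four equations.

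Summing this relation then yields each master equation in turn. For $u$, summing over $t \ge 1$ and reindexing $t' = t-1$ recovers $\sum_{t' \ge 0}\Prob(T_{p+\Delta p}=t',Y=1) = u(p+\Delta p)$ inside the integral, giving the first equation with no additive constant (a probability, not a time, is being propagated). For $M_1$ I would multiply by $t$, write $t = (t-1)+1 = t'+1$, and split the resulting sum into $\sum t'\Prob(\cdots) = M_1(p+\Delta p)$ and $\sum \Prob(\cdots) = u(p+\Delta p)$; integrating and using the $u$-equation just proved collapses the second piece to $u(p)$, which is exactly the source term. The computation for $M_0$ is identical with $1-u$ in place of $u$, the constant $\sum_{t'}\Prob(T_{p+\Delta p}=t',Y=0) = 1-u(p+\Delta p)$ integrating to $1-u(p)$ because the transition density integrates to one and $\int \Prob_{p\rightarrow p+\Delta p}\,u(p+\Delta p)\,d(\Delta p)=u(p)$. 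Finally $m = M_0 + M_1$ gives the last equation, the two source terms $1-u(p)$ and $u(p)$ adding to the constant $1$.

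The main work is not the algebra but its justification. First, the one-step identity rests on the Markov property and time-homogeneity of $(X_t)$, and on correctly treating the absorbing boundary: the $t=0$ contributions must be discarded for interior $p$, while when $p+\Delta p \in \{0,1\}$ the restarted chain is already absorbed, so the $t'=0$ terms are retained and encode the possibility of fixation within a single generation. Second, every interchange of the infinite sum with the integral $\int_{\Delta p}(\cdots)\,d(\Delta p)$ must be licensed; since all summands are nonnegative this is Tonelli's theorem for $u$, and for $M_0, M_1$ one additionally needs the finiteness of the expected fixation time (equivalently, almost-sure absorption with a finite mean) so that the series defining $M_0(p+\Delta p)$ and $M_1(p+\Delta p)$ converge and may be moved through the integral. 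I would state this integrability as the standing hypothesis under which the proposition holds. A final point to make explicit is the interpretation of $\Prob_{p\rightarrow p+\Delta p}$ as the transition density of the continuous frequency variable in the infinite-population limit, so that the sums over discrete states become Lebesgue integrals; the derivation above is formally identical in the discrete and continuous settings, and it is only here that the passage to the continuum enters.
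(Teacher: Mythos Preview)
Your proposal is correct and follows essentially the same first-step conditioning argument as the paper: use the Markov property and time-homogeneity to write $\Prob(T_p=t,\,Y_{T_p}=a)$ in terms of the transition density and the chain restarted at $p+\Delta p$, then sum (with or without the factor $t$) and split $t=(t-1)+1$. The only differences are organizational---you derive $u$ first and obtain $m$ as $M_0+M_1$, while the paper proves $m$ directly and sketches the rest---and you are more explicit about the Tonelli/integrability justification and boundary handling, points the paper leaves implicit.
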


\begin{proof}
We first provide details for the last equation. The proof is based on the law of total probability:
$$\Prob(T_p=t)  = \int_{\Delta p} \Prob_{p \rightarrow p + \Delta p}\Prob(T_p=t \mid Y_{1}=p + \Delta p)d(\Delta p)\,.$$
We multiply this equality by $t$ and sum for all integers to get $m(p)$ in left hand side:
$$
    \begin{aligned}
      m(p) &= \int_{\Delta p} \Prob_{p \rightarrow p + \Delta p}\sum_{t=0}^{\infty} t\Prob(T_p=t\mid Y_{1}=p + \Delta p)d(\Delta p)\,,\\
       &=\int_{\Delta p} \Prob_{p \rightarrow p + \Delta p}\sum_{t=1}^{\infty} (t-1 +1)\Prob(T_p=t\mid Y_{1}=p + \Delta p)d(\Delta p)\,,\\
    &=\int_{\Delta p} \Prob_{p \rightarrow p + \Delta p}\sum_{t=1}^{\infty} (t-1)\Prob(T_p=t\mid Y_{1}=p + \Delta p)d(\Delta p) + 1\,,\\
     &=\int_{\Delta p} \Prob_{p \rightarrow p + \Delta p}\sum_{t=1}^{\infty} (t-1)\Prob(T_{p +\Delta p}=t-1)d(\Delta p) + 1\,,\\
     &=\int_{\Delta p} \Prob_{p \rightarrow p + \Delta p}\sum_{t=0}^{\infty} t\Prob(T_{p +\Delta p}=t)d(\Delta p) + 1\,,\\
            &=  \int_{\Delta p} \Prob_{p \rightarrow p + \Delta p} m(p + \Delta p)d(\Delta p)  + 1\,.\\
    \end{aligned}
$$
Results for $M_0(\cdot)$ and $M_1(\cdot)$ are obtained by considering the same proof replacing $\Prob(T_p=t)$ by $\Prob(T_p=t, Y_{T_p} = 0)$ or $\Prob(T_p=t, Y_{T_p} = 1)$ in the law of total probability. For equation in $u(\cdot)$ (the first equation) with sum all probabilities $\Prob(T_p=t, Y_{T_p} = 1)$ without the multiplicative factor $t$.
\end{proof}

\begin{remark}
With discrete proportions for the initial probability and a finite size population ($p = i/N$, $i=0,\dots,N$), equations in Proposition \ref{masterEquations} are valid with integration with respect to counting measures. They can be solved exactly by matrix inversion.
\end{remark}

\begin{remark}
Similar results are available for the second moment: $w(p) = \E[T^2_p]=\sum_{t=0}^{\infty} t^2 \Prob(T_p=t)$, leading to equation
$$w(p) = 1 + 2m(p) + \int_{\Delta p}\Prob_{p \rightarrow p + \Delta p} w(p + \Delta p)d(\Delta p) \,,$$
and we would get the variance $v(p)$ as $w(p) - (m(p))^2$. We did not succeed in finding explicit solutions for second moments. Explicit solutions are known, to the best of our knowledge, only in non-selection mode \cite{narain1970note, maruyama1972average}.
\end{remark}


\subsection{Second Order Differential Equations}

Master equations can be approximated by Taylor expansions around small variations in proportions $p$.

\begin{proposition}
\label{equadiffm1}
Equations in Proposition \ref{masterEquations} can be approximated by the following second order differential equations:
\begin{equation}
\label{secondorderequation}
 \left\{
    \begin{aligned}
 &  u''(p) + \alpha \, u'(p)  = 0 \,,\\
 &  M_0''(p) + \alpha \, M_0'(p) + \frac{2N}{p(1-p)}(1-u(p)) = 0 \,,\\
 &  M_1''(p) + \alpha \, M_1'(p) + \frac{2N}{p(1-p)}u(p) = 0 \,,\\
 &  m''(p) + \alpha \, m'(p) + \frac{2N}{p(1-p)} = 0\,, \\
\end{aligned}
\right.
\end{equation}
with natural boundary conditions: $u(0) =0$, $u(1) = 1$, $M_0(0) = 0$, $M_1(1) = 0$ and $m(0) = m(1) = 0$, which leads to $M_0(1) =(1-u(1))m_0(1)= 0$ and $M_1(0) =u(0)m_1(0)= 0$.
\end{proposition}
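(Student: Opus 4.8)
The plan is to substitute a second-order Taylor expansion of each unknown function into its master equation from Proposition~\ref{masterEquations} and read off the differential equation from the coefficients of the first two moments of the one-step increment $\Delta P_t$. Throughout I would work in the diffusion regime $N \to \infty$, $s \to 0$ with $\alpha = 2sN$ fixed, so that $s = O(1/N)$. For any $f \in \{u, M_0, M_1, m\}$ I would write
$$f(p + \Delta p) = f(p) + f'(p)\,\Delta p + \tfrac{1}{2} f''(p)\,(\Delta p)^2 + o\big((\Delta p)^2\big)$$
and insert it into $\int_{\Delta p}\Prob_{p \rightarrow p + \Delta p} f(p+\Delta p)\,d(\Delta p)$. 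Since $\Prob_{p \rightarrow p + \Delta p}$ is a probability density integrating to $1$, the constant term reproduces $f(p)$ exactly, while the linear and quadratic terms yield $f'(p)\,\E[\Delta P\mid Y_0=p]$ and $\tfrac12 f''(p)\,\E[(\Delta P)^2\mid Y_0=p]$.

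Next I would compute these moments from $X_{t+1} \sim \mathcal{B}(N, \pi)$ with $\pi = (1+s)p/(1+sp)$, giving
$$\E[\Delta P\mid Y_0=p] = \pi - p = \frac{s\,p(1-p)}{1+sp} = s\,p(1-p) + O(s^2)$$
and, using $\var(Y_1\mid Y_0=p) = \pi(1-\pi)/N$,
$$\E[(\Delta P)^2\mid Y_0=p] = \frac{\pi(1-\pi)}{N} + (\pi - p)^2 = \frac{p(1-p)}{N} + O(1/N^2).$$
In this regime both leading moments are $O(1/N)$, whereas $(\E[\Delta P])^2$ and the third and higher moments are $O(1/N^2)$.

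Taking the $m$-equation as template, the cancellation of the $m(p)$ terms leaves
$$0 = 1 + m'(p)\,\E[\Delta P\mid Y_0=p] + \tfrac12 m''(p)\,\E[(\Delta P)^2\mid Y_0=p] + \cdots,$$
and multiplying by $2N/(p(1-p))$, using $2Ns = \alpha$, and discarding the $O(1/N)$ remainder yields $m''(p) + \alpha\,m'(p) + 2N/(p(1-p)) = 0$. The identical manipulation on the $u$-, $M_0$- and $M_1$-equations, whose inhomogeneous terms $0$, $1-u(p)$ and $u(p)$ replace the constant $1$, produces the other three equations. The boundary conditions I would read off directly from the definitions: $u(0)=0$ and $u(1)=1$ because $0$ and $1$ are absorbing; $m(0)=m(1)=0$, $M_0(0)=0$ and $M_1(1)=0$ because starting in an absorbing state forces $T_p=0$, annihilating every term carrying the factor $t$; and $M_0(1)=0$, $M_1(0)=0$ then follow from $M_0=(1-u)m_0$, $M_1=u\,m_1$ together with $u(1)=1$, $u(0)=0$.

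The main obstacle I expect is rigorously controlling the discarded terms: one must verify that the $(\E[\Delta P])^2$ part of the second moment together with all higher Taylor and moment contributions remain of lower order even after multiplication by the amplifying factor $2N/(p(1-p))$, which is precisely the (heuristic) content of the diffusion approximation; additional care is needed near $p\in\{0,1\}$, where $1/(p(1-p))$ diverges.
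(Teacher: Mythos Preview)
Your proposal is correct and follows essentially the same route as the paper: a second-order Taylor expansion inside the master equation, computation of $\E[\Delta P]$ and $\E[(\Delta P)^2]$ from the binomial step with $\pi=(1+s)p/(1+sp)$, expansion in the diffusion scaling $s=\alpha/(2N)$, and identification of the leading $O(1/N)$ terms to obtain each ODE. Your explicit justification of the boundary conditions and your remark on the remainder control near $p\in\{0,1\}$ go slightly beyond what the paper spells out, but the core argument is identical.
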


\begin{proof}
A detailed proof is given in \ref{app:diffequation}. All the results are based on the following expansion given here for a generic function $f$:
 $$\begin{aligned}
 \int  \Prob_{p \rightarrow p+q} &f(p + \Delta p)d(\Delta p)\\
 &=  \int \Prob_{p \rightarrow p + \Delta p} \left (f(p) + \frac{f'(p)}{1!}\Delta p + \frac{f''(p)}{2!}\Delta p^2 + O(\Delta p^3)  \right )d(\Delta p)\\
 &\approx  f(p) + \E[\Delta P]f'(p) + \frac{1}{2}\E[(\Delta P)^2]f''(p)\,.
  \end{aligned}$$
\end{proof}

Solution for unknown function $u(\cdot)$ in first equation is easy to found (see \cite{kimura1962probability}): it is given by formula $u(p) = \frac{1-e^{-\alpha p}}{1 - e^{-\alpha}}$ and we also have $1-u(p) = \frac{1-e^{\alpha (1-p)}}{1 - e^{\alpha}}$.

\section{Analytic Solutions for Fixation Time}
\label{sec:3}

\subsection{The Exponential Integral Function}

Our solutions are based on the exponential integral function  $x \mapsto \EI(x)$ from $\mathbb{R}^*$ to $\mathbb{R}$ which is a special function defined by an integral (understood in terms of the Cauchy principal value):
$$\EI(x) = -\int_{-x}^{+ \infty} \frac{e^{-t}}{t} dt = \int_{-\infty}^{x} \frac{e^{t}}{t} dt\,.$$
We have the following relation:
\begin{equation}
\label{EIintegral}
 \EI(x) = \ln |x| e^{x} - \int_{0}^{x} \ln |z| e^{z} dz + \gamma\,,  
\end{equation}
with $\gamma$ the Euler-Mascheroni constant ($\gamma \approx 0.5772156649$) and integral
\begin{equation}
\label{EIintegral2}
\int e^{-\alpha x} \EI(\alpha x) dx = \frac{\ln \alpha x}{\alpha} + \frac{e^{-\alpha x} \EI(\alpha x)}{\alpha} \,.
\end{equation}
The shape of the EI function is presented in Figure~\ref{fig:EI}.
\begin{figure}[!h]
\center\includegraphics[width=0.6\textwidth]{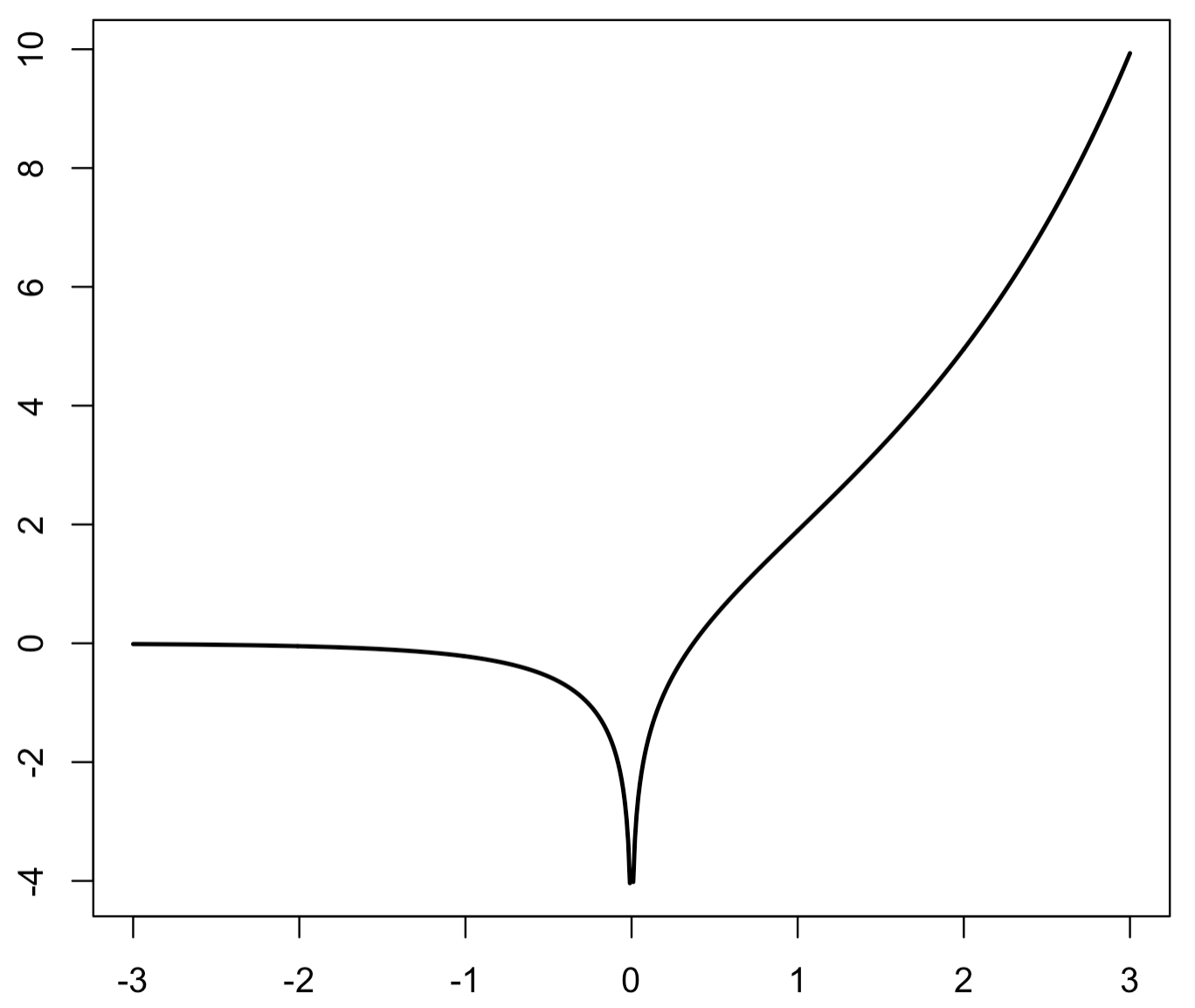} 
\caption{Shape of the exponential integral function near zero.}
\label{fig:EI}
\end{figure}

\subsection{Time to Fixation}

We start by solving the differential equation in Proposition \ref{equadiffm1} for function $m(\cdot)$ and determine an analytic expression.

\begin{theorem}
\label{theorem1}
Time to fixation to both zero and one with initial allele frequency $p$ and a selection effect of order $\alpha = 2sN$ is given by formula
$$m(p)  = 2N\Big( \omega(\alpha,p) + \omega(-\alpha,1-p)\Big)\,,$$
with 
\begin{align*}
\alpha \, \omega(\alpha,p) & =  u(p)\Big( \ln |\alpha| + \gamma - e^{- \alpha} \EI(\alpha)  \Big) - \Big( \ln |\alpha p| + \gamma - e^{-\alpha p} \EI(\alpha p)  \Big)\\
& =  \Big[\frac{1-e^{- \alpha p}}{1 - e^{-\alpha x}}\Big( -e^{-\alpha x} \EI(\alpha x) +  \ln |\alpha x| +\gamma \Big)\Big]_{x=p}^{x=1}\\
& =  \Big[\frac{1-e^{- \alpha p}}{1 - e^{- \alpha x}}\Big( e^{- \alpha x}\int_{0}^{ \alpha x} \ln |z| e^{z}dz  \Big)\Big]_{x=p}^{x=1}\,,
\end{align*}
\end{theorem}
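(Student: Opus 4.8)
The plan is to solve directly the two-point boundary value problem isolated for $m$ in Proposition~\ref{equadiffm1}, namely $m''(p) + \alpha\,m'(p) + \frac{2N}{p(1-p)} = 0$ on $(0,1)$ subject to $m(0)=m(1)=0$. Since the homogeneous operator $m \mapsto m'' + \alpha m'$ has the explicit kernel spanned by the constant $1$ and by $e^{-\alpha p}$, and since the forcing term is the only source of exponential integrals, the natural first move is reduction of order: set $g = m'$, so that $g' + \alpha g = -\frac{2N}{p(1-p)}$, a first order linear equation solved by the integrating factor $e^{\alpha p}$, giving $e^{\alpha p} g(p) = C - 2N \int \frac{e^{\alpha p}}{p(1-p)}\,dp$.

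The key structural step is the partial fraction identity $\frac{1}{p(1-p)} = \frac{1}{p} + \frac{1}{1-p}$, which splits the remaining integral into $\int \frac{e^{\alpha p}}{p}\,dp$ and $\int \frac{e^{\alpha p}}{1-p}\,dp$. The first is exactly $\EI(\alpha p)$ by the defining integral of the exponential integral; the second, after the substitution $p \mapsto 1-p$, becomes (up to the factor $e^{\alpha}$) an exponential integral in the variable $-\alpha(1-p)$. This is precisely the mechanism that produces the symmetric pair $\omega(\alpha,p)$ and $\omega(-\alpha,1-p)$ in the statement: the $1/p$ piece generates the first term and the $1/(1-p)$ piece generates the second, the replacement $(\alpha,p) \mapsto (-\alpha, 1-p)$ being the symmetry $n(q) := m(1-q)$ under which the equation is invariant up to flipping the sign of $\alpha$. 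After a second integration of $g$ and renaming constants, I would write the general solution as $m = c_1 + c_2 e^{-\alpha p} + (\text{particular part})$ and impose $m(0) = m(1) = 0$.

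Fixing the two constants is where the factor $u(p)$ enters: because $u(p) = \frac{1-e^{-\alpha p}}{1-e^{-\alpha}}$ is exactly the homogeneous solution normalized by $u(0)=0$, $u(1)=1$, the boundary conditions force the homogeneous contribution to be proportional to $u$, which reproduces the weight $u(p)$ multiplying the $x=1$ endpoint term and the weight $1$ multiplying the $x=p$ endpoint term in the bracketed forms of $\alpha\,\omega$. Once this first form in terms of $\EI(\alpha p)$, $\EI(\alpha)$, $\ln|\alpha p|$ and $\ln|\alpha|$ is obtained, the remaining two displayed expressions follow by purely algebraic rewriting using relation~\eqref{EIintegral}, which trades $\EI$ for $\int_0^{x}\ln|z|e^z\,dz$, together with the antiderivative~\eqref{EIintegral2} used in carrying out the outer integration.

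I expect the main obstacle to be the careful treatment of the boundary conditions in the presence of the singularities at $p=0$ and $p=1$: both $\ln p$ (from the elementary part) and $\EI(\alpha p) \sim \gamma + \ln|\alpha p|$ near $p = 0$ diverge logarithmically, and one must verify that these divergences cancel so that $m$ extends continuously with value $0$ at each endpoint. Handling this cleanly — rather than integrating from an arbitrary interior base point — is what forces the specific grouping $\ln|\alpha p| + \gamma - e^{-\alpha p}\EI(\alpha p)$ appearing in $\omega$, whose limit at $p=0$ is finite; checking that limit, and the analogous one at $p=1$ for the second term, is the delicate part of the argument.
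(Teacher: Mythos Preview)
Your proposal is correct and follows essentially the same route as the paper: integrating factor $e^{\alpha p}$, partial fractions $\tfrac{1}{p(1-p)}=\tfrac{1}{p}+\tfrac{1}{1-p}$ yielding two exponential integrals, a second integration via~\eqref{EIintegral2}, and determination of the two constants from $m(0)=m(1)=0$ using the asymptotic $\EI(\alpha x)-\ln|x|\to\gamma+\ln|\alpha|$ as $x\to 0$. Your explicit identification of $u(p)$ as the normalized homogeneous solution and of the $(\alpha,p)\mapsto(-\alpha,1-p)$ symmetry is exactly the structure the paper records in the remark following the proof.
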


\begin{proof}
Equation in $m(\cdot)$ in Proposition \ref{masterEquations} can be written as
$$ 
     \left [e^{\alpha p} m'(p)  \right ]'= -2N\frac{e^{\alpha p}}{p(1-p)} = -2N\Big(\frac{e^{\alpha p}}{p}+e^{\alpha}\frac{e^{-\alpha (1-p)}}{1-p}\Big)\,,$$
which can be integrated using the $\EI$ function:
   $$e^{\alpha p} m'(p)  = 2N\Big(-\EI({\alpha}p) + e^{{\alpha}}\EI({\alpha}(p-1))\Big) + C_1\,.
$$
We obtain by a second integration using \ref{EIintegral2} (see also \cite{geller1969table}):
$$m(p)  =  \frac{2N}{\alpha}\left [e^{- \alpha p} \EI(\alpha p) - e^{- \alpha (p-1)}\EI(\alpha (p-1)) + \ln(1-p) - \ln(p)  \right ] $$
$$+ \, C_2 - C_1\frac{e^{-\alpha p}}{\alpha}\,. $$

We determine the constants using boundary values $m(0)=m(1)=0$ and the limit $\lim\limits_{x \rightarrow 0} \Big(\EI(\alpha x) - \ln|x| \Big) = \gamma + \ln  |\alpha |$ and we get:
$$
\left\{
    \begin{aligned}
    0 &=C_2 - \frac{C_1}{ \alpha} + \frac{2N}{\alpha}\Big(\gamma + \ln|\alpha | - e^{ \alpha}\EI(-\alpha)\Big) \,,\\
    0 &= C_2 - C_1\frac{e^{- \alpha}}{ \alpha} - \frac{2N}{\alpha}\Big(\gamma + \ln|\alpha | - e^{-\alpha}\EI( \alpha)\Big)\,.\\
        \end{aligned}
        \right.
$$
This system is easy to solve for constants $C_1$ and $C_2$ to find the formula presented in the theorem. The integral expression for $\omega(\alpha,p)$ comes from relation (\ref{EIintegral}) as the terms involving $\gamma$ cancel out.
\end{proof}

\begin{remark}
Both functions $p \mapsto 2N\omega(\alpha,p)$ and $p \mapsto 2N\omega(-\alpha,1-p)$ are solution to its associated equation in (\ref{secondorderequation}) as this equation is invariant by transformation $(\alpha,p) \mapsto (-\alpha, 1-p)$.
\end{remark}

\subsection{Time to Zero and Time to One}

Replacing $M_0(\cdot)$ by $(1-u(\cdot))m_0(\cdot)$ and $M_1(\cdot)$ by $u(\cdot) m_1(\cdot)$ in equations (\ref{secondorderequation}) we get the following equations:
\begin{equation}
\label{secondorderequation2}
 \left\{
    \begin{aligned}
 &  m_0''(p) + \alpha \frac{1 + e^{\alpha (1-p)}}{1 - e^{\alpha (1-p)}}\, m_0'(p) + \frac{2N}{p(1-p)} = 0 \,,\\
 &  m_1''(p) + \alpha \frac{1 + e^{-\alpha p}}{1 - e^{-\alpha p}}\,\, m_1'(p) + \frac{2N}{p(1-p)} = 0 \,,\\
\end{aligned}
\right.
\end{equation}
showing the simple relation $m_0(p) = m_1(1-p)$. Therefore, we only need to solve the equation for $M_1(\cdot)$, as $M_0(p) = (1-u(p))m_1(1-p) = \frac{1-u(p)}{u(1-p)}M_1(1-p)$ leading to:
\begin{equation}
\label{M0M1}
 M_0(p) = \frac{1-e^{\alpha (1-p)}}{1-e^{-\alpha (1-p)}} \frac{1 - e^{-\alpha}}{1 - e^{\alpha}} M_1(1-p) = e^{-\alpha p} M_1(1-p)\,.
\end{equation}

We now derive the complete solution for function $m_1(\cdot)$ by solving its differential equation in $M_1(\cdot)$ in Proposition \ref{equadiffm1}. This result is already presented in \cite{kimura1969average} in integral form.

\begin{theorem}
\label{theorem2}
Time to fixation excluding allele $A$ loss with initial allele frequency $p$ and a selection effect of order $\alpha = 2sN$ is given by formula:
$$m_1(p)  = \frac{2N}{\alpha }\Big(e_1(p) + e_2(p) + e_3(p) + \frac{C}{u(p)} + \gamma + \ln |\alpha |  - e^{-\alpha}\EI(\alpha)  )\Big)\,,$$
with 
$$
\left\{
\begin{aligned}
& e_1(p) = \frac{e^{-\alpha p} \EI(\alpha p) - e^{\alpha (1-p)} \EI(\alpha (p-1))}{1-e^{-\alpha p}}\,, \\
& e_2(p) = \frac{\EI(-\alpha p) - e^{- \alpha}\EI(\alpha (1-p)) }{1-e^{-\alpha p}} \,, \\
& e_3(p) = \frac{1 + e^{-\alpha p}}{1-e^{-\alpha p}}\ln \Big(\frac{1-p}{p}\Big)\,, \\
& C =  \frac{e^{-\alpha}}{1-e^{-\alpha}} \Big( 2(\gamma + \ln |\alpha |) -e^{\alpha}\EI(-\alpha) - e^{-\alpha}\EI(\alpha)\Big) \,.
\end{aligned}
\right.
$$
\end{theorem}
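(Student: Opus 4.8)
The plan is to solve the linear second–order ODE for $M_1$ from Proposition~\ref{equadiffm1} exactly in the spirit of Theorem~\ref{theorem1}, and then recover $m_1=M_1/u$ using the known $u(p)=\frac{1-e^{-\alpha p}}{1-e^{-\alpha}}$. Since the homogeneous operator is again $M_1''+\alpha M_1'$, I would use the integrating factor $e^{\alpha p}$ and write the equation as $[e^{\alpha p}M_1'(p)]'=-\frac{2N}{p(1-p)}\,e^{\alpha p}u(p)$. Substituting $u$ and simplifying, the right–hand side becomes $-\frac{2N}{1-e^{-\alpha}}\frac{e^{\alpha p}-1}{p(1-p)}$; note that the numerator $e^{\alpha p}-1$ removes the $1/p$ singularity, so the forcing is regular at $p=0$ and singular only at $p=1$. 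This is the direct analogue of the rewriting used for $m$ in Theorem~\ref{theorem1}.

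First I would integrate once. Splitting $\frac{1}{p(1-p)}=\frac1p+\frac1{1-p}$ and using $\int\frac{e^{\alpha p}}{p}\,dp=\EI(\alpha p)$ and $\int\frac{e^{\alpha p}}{1-p}\,dp=-e^{\alpha}\EI(\alpha(p-1))$, together with the elementary logarithms of $\frac1p$ and $\frac1{1-p}$, I obtain $e^{\alpha p}M_1'(p)$ up to one constant. The second integration is where the work concentrates: after dividing by $e^{\alpha p}$, the terms $e^{-\alpha p}\EI(\alpha p)$ and $e^{-\alpha p}\EI(\alpha(p-1))$ are handled by (\ref{EIintegral2}) (the latter after the shift $p\mapsto p-1$), while the two terms $e^{-\alpha p}\ln p$ and $e^{-\alpha p}\ln(1-p)$ are integrated by parts, each producing a fresh exponential integral, namely $\EI(-\alpha p)$ and $e^{-\alpha}\EI(\alpha(1-p))$. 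Collecting everything and dividing by $u(p)$, the four $\EI$ contributions should regroup precisely into $e_1(p)+e_2(p)$, the logarithmic remainders into $e_3(p)$, and the two integration constants survive as the homogeneous part $\frac{A+Be^{-\alpha p}}{u(p)}$, i.e. as the $\frac{C}{u(p)}$ term together with the additive constant.

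It then remains to fix the two constants from the boundary data $M_1(0)=0$ and $M_1(1)=0$. The essential tool is the limit already used in Theorem~\ref{theorem1}, $\lim_{x\to0}\big(\EI(\alpha x)-\ln|x|\big)=\gamma+\ln|\alpha|$: at each endpoint the $\EI$ terms and the $\ln$ terms individually diverge, but their logarithmic singularities cancel, leaving two finite linear equations. Equivalently, since $u(p)\sim\alpha p/(1-e^{-\alpha})$ near $0$, requiring that the $1/p$–type divergence vanish is what pins down $C$, and the condition at $p=1$ fixes the remaining additive constant; substituting back should yield the stated closed form for $m_1$.

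The main obstacle is precisely this combination of the second integration with the endpoint analysis. There are many $\EI$ and logarithmic pieces evaluated at the four arguments $\pm\alpha p$ and $\alpha(1\mp p)$, and the delicate point is controlling their singular behaviour so that the boundary conditions produce finite equations rather than divergences; this is exactly the step where $C$ and the additive term are determined and where sign and factor errors are easiest to make. Once the singular parts are shown to cancel through the $\gamma+\ln|\alpha|$ limit, the remainder reduces to the bookkeeping already rehearsed for $m$ and $\omega$ in Theorem~\ref{theorem1}.
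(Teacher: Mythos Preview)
Your proposal is correct and follows exactly the approach the paper indicates: the paper's proof is a single sentence stating that one solves the ODE for $M_1$ in Proposition~\ref{equadiffm1} just as in Theorem~\ref{theorem1} and then divides by $u$. Your write-up is a faithful and detailed unpacking of that sentence, including the integrating factor, the two successive integrations via $\EI$ and \eqref{EIintegral2}, the integration by parts producing the additional $\EI(-\alpha p)$ and $e^{-\alpha}\EI(\alpha(1-p))$ terms, and the determination of the constants through the limit $\lim_{x\to0}\big(\EI(\alpha x)-\ln|x|\big)=\gamma+\ln|\alpha|$.
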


The proof is straightforward: it is similar to the proof of Theorem~\ref{theorem1} by solving the equation for $M_1(\cdot)$ in Proposition~\ref{equadiffm1} and then dividing by function $u$ to get the result. We can use relation (\ref{M0M1}) to notice that it is quite easy to come back to solution of Theorem~\ref{theorem1} using relations
$$
\left\{
\begin{aligned}
& u(p)e_1(p) + e^{- \alpha p} u(1-p)e_1(1-p) = 0 \,, \\
& u(p)e_2(p) + e^{- \alpha p} u(1-p)e_2(1-p) = e^{-\alpha p} \EI(\alpha p) - e^{\alpha (1-p)} \EI(\alpha (p-1)) \,, \\
& u(p)e_3(p) + e^{- \alpha p} u(1-p)e_3(1-p) = \ln \Big(\frac{1-p}{p} \Big) \,.\\
\end{aligned}
\right.
$$

We plot the shape of functions $m(\cdot)$ and $m_1(\cdot)$ for various values of the selection coefficient in Figure~\ref{fig:timeToFixation}. Population size $N$ is only present as a multiplicative factor for all fixation time formulas. It does not affect the general shape of these functions.

\begin{figure}[!h]
\center\includegraphics[width=\textwidth]{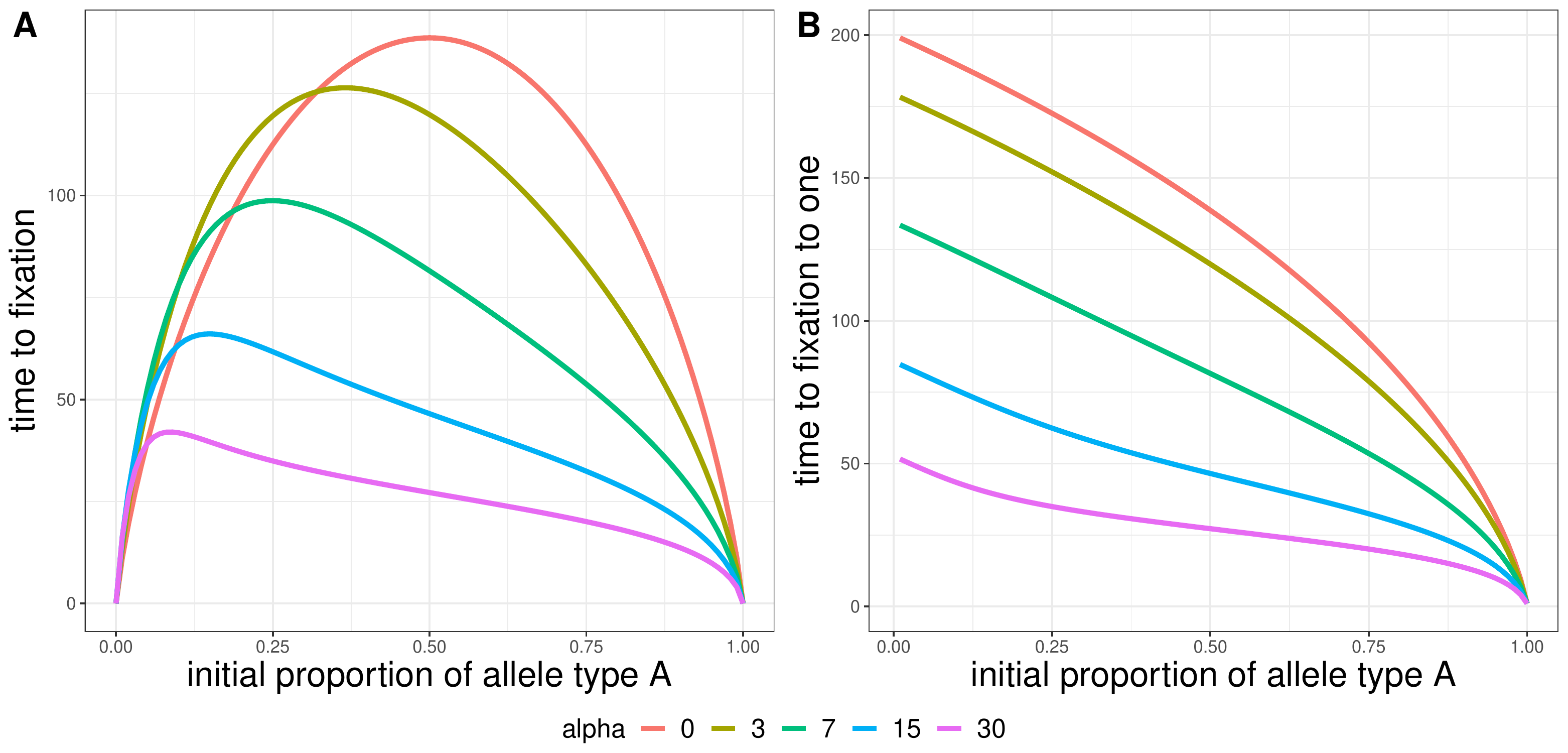} 
\caption{\textbf{(A)} Time-to-fixation function $p \mapsto m(p)$ and \textbf{(B)} fixation-to-one function $p \mapsto m_1(p)$ for various selection effect values and arbitrary population size $N=100$.}
\label{fig:timeToFixation}
\end{figure}

\section{An Approximate Solution}
\label{sec:4}

In our search for simple solutions, we propose an approximate solution for the expression of $m(\cdot)$ involving only elementary functions. 

\begin{theorem}
\label{theoremApprox}
An approximation solution satisfying the boundary conditions is
$$m_{(k)}(p) = 2N \Big(\omega_k(\alpha,p) + \omega_k(-\alpha,1-p)\Big)\,,$$
with
\begin{align*}
\alpha \omega_k(\alpha,p) & =  \Big[\frac{1-e^{- \alpha p}}{1 - e^{- \alpha x}}\Big(-\sum_{n=1}^{k} \frac{(-\alpha x)^n}{n!} \Big(\ln |\alpha x| - H_{n} \Big) \Big)\Big]_{x=p}^{x=1}\,,
\end{align*}
and $H_n$ the n-th harmonic number ($H_n = \sum_{i=1}^{n}\frac{1}{i}$). An upper bound is given by
$$| m(p)- m_{(k)}(p)| \le  2N\frac{e^{\alpha}\alpha^{k}}{k!} \Big(\frac{2|\ln \alpha|+  e^{-1}}{k+1} + 3 \Big)\,.$$
\end{theorem}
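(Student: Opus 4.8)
The plan is to recognize $m_{(k)}$ as the exact solution of Theorem~\ref{theorem1} with the transcendental factor $\Psi(y):=e^{-y}\int_0^y\ln|z|e^z\,dz$ replaced by the partial sum of its power series. First I would record the expansion $\Psi(y)=-\sum_{n\ge1}\frac{(-y)^n}{n!}\big(\ln|y|-H_n\big)$, obtained by inserting $e^z=\sum_n z^n/n!$ into the integral, using $\int_0^y z^n\ln|z|\,dz=\frac{y^{n+1}}{n+1}\ln|y|-\frac{y^{n+1}}{(n+1)^2}$, re-indexing, and invoking the identity $\sum_{n\ge1}\frac{(-y)^n}{n!}H_n=-e^{-y}\sum_{m\ge1}\frac{y^m}{m\,m!}$ (provable from $H_n=\int_0^1\frac{1-t^n}{1-t}\,dt$). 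Writing $\Psi_k$ for the degree-$k$ truncation, the third identity of Theorem~\ref{theorem1} reads $\alpha\,\omega(\alpha,p)=u(p)\Psi(\alpha)-\Psi(\alpha p)$, and likewise $\alpha\,\omega_k(\alpha,p)=u(p)\Psi_k(\alpha)-\Psi_k(\alpha p)$ with $\Psi_k(0)=0$. Since $u(0)=0$ and $u(1)=1$, this gives $\omega_k(\alpha,0)=\omega_k(\alpha,1)=0$ at once, and the symmetric pairing in $m_{(k)}(p)=2N\big(\omega_k(\alpha,p)+\omega_k(-\alpha,1-p)\big)$ forces $m_{(k)}(0)=m_{(k)}(1)=0$, so the boundary conditions hold.

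For the error, set $R_k(y):=\Psi(y)-\Psi_k(y)=-\sum_{n>k}\frac{(-y)^n}{n!}\big(\ln|y|-H_n\big)$. Using the excerpt's identity $\frac{1-e^{\alpha(1-p)}}{1-e^{\alpha}}=1-u(p)$ to handle the $(-\alpha,1-p)$ branch, subtracting $m_{(k)}$ from $m$ collapses to
\[\frac{\alpha}{2N}\big(m(p)-m_{(k)}(p)\big)=u(p)R_k(\alpha)-R_k(\alpha p)-(1-u(p))R_k(-\alpha)+R_k(-\alpha(1-p)).\]
All four arguments have modulus $\le\alpha$, and since $0\le u(p)\le1$ I can take absolute values and group the two $\pm\alpha$ terms (whose modulus bound depends only on $|y|=\alpha$) to obtain $\frac{\alpha}{2N}|m-m_{(k)}|\le\Phi(\alpha)+\Phi(\alpha p)+\Phi(\alpha(1-p))$, where $\Phi(r):=\sum_{n>k}\frac{r^n}{n!}\big(|\ln r|+H_n\big)$.

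Next I would bound $\Phi=A+B$ in two pieces. For the harmonic piece $B(r)=\sum_{n>k}\frac{r^n}{n!}H_n$, monotonicity gives $B(\alpha p),B(\alpha(1-p))\le B(\alpha)$, so the three terms contribute at most $3B(\alpha)$; the crude bound $H_n\le n$ telescopes $B(\alpha)\le\alpha\sum_{m\ge k}\frac{\alpha^m}{m!}\le\frac{e^{\alpha}\alpha^{k+1}}{k!}$, which after dividing by $\alpha$ is exactly the additive constant $3$. For the logarithmic piece $A(r)=|\ln r|\sum_{n>k}\frac{r^n}{n!}\le\frac{e^{\alpha}}{(k+1)!}\,r^{k+1}|\ln r|$ (using $\sum_{n>k}\frac{r^n}{n!}\le\frac{r^{k+1}}{(k+1)!}e^r$ and $e^r\le e^{\alpha}$), I rely on the single-variable maximum $\max_{0<r\le1}r^{k+1}|\ln r|=\frac{e^{-1}}{k+1}$, attained at $r=e^{-1/(k+1)}$, together with monotonicity of $r^{k+1}\ln r$ on $[1,\alpha]$. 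The term at $r=\alpha$ yields $\alpha^{k+1}|\ln\alpha|$, and the two inner arguments $\alpha p,\alpha(1-p)$ (which sum to $\alpha$) are controlled by the pairing inequality $\phi(a)+\phi(b)\le\alpha^{k+1}|\ln\alpha|+\frac{e^{-1}}{k+1}$ for $a+b=\alpha$, with $\phi(r)=r^{k+1}|\ln r|$. Hence $A(\alpha)+A(\alpha p)+A(\alpha(1-p))\le\frac{e^{\alpha}}{(k+1)!}\big(2\alpha^{k+1}|\ln\alpha|+\frac{e^{-1}}{k+1}\big)$, and dividing by $\alpha$ (legitimate since $\alpha^{k+1}(k+1)\ge1$ for $\alpha\ge1$) produces the bracket $\frac{2|\ln\alpha|+e^{-1}}{k+1}$; adding the two pieces and multiplying by $2N$ reproduces the stated bound.

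The main obstacle is precisely the pairing inequality: because $\phi(r)=r^{k+1}|\ln r|$ is non-monotone (a bump of height $\frac{e^{-1}}{k+1}$ on $(0,1)$, then growth for $r>1$), one cannot just bound each inner term by its maximum without paying a second $|\ln\alpha|$ and a second $e^{-1}$. I would resolve it by a case split on $s:=\min(a,b)$: if $s\le1$ then $\phi(s)\le\frac{e^{-1}}{k+1}$ while $\phi(\alpha-s)\le\phi(\alpha)=\alpha^{k+1}|\ln\alpha|$ by monotonicity on $[1,\alpha]$; if $s>1$ then both $a,b\ge1$ and the super-additivity $a^{k+1}\ln a+b^{k+1}\ln b\le(a+b)^{k+1}\ln(a+b)$, which follows from $(a+b)^{k+1}\ge a^{k+1}+b^{k+1}$ and $\ln(a+b)\ge\ln a,\ln b$, gives the bound with no $e^{-1}$ surplus. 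This argument assumes $\alpha\ge1$; in the small-selection regime $\alpha<1$ all four arguments lie in $(0,1)$, so every logarithmic factor is already dominated by $\frac{e^{-1}}{k+1}$ and the same estimates close the proof.
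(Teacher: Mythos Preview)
Your derivation of the series expansion for $\Psi(y)=e^{-y}\int_0^y\ln|z|e^z\,dz$ and your three-term decomposition of the error (combining the $u(p)R_k(\alpha)$ and $(1-u(p))R_k(-\alpha)$ pieces into a single $\Phi(\alpha)$) are both correct and, in fact, slightly cleaner than the paper's four-term splitting. The boundary-condition check is fine.

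The gap is in your pairing inequality $\phi(a)+\phi(b)\le\alpha^{k+1}|\ln\alpha|+\frac{e^{-1}}{k+1}$ with $\phi(r)=r^{k+1}|\ln r|$ and $a+b=\alpha$. Your case split covers $s=\min(a,b)\le1$ with $\alpha-s\ge1$, and $s>1$; it omits the case $s\le1$ and $\alpha-s<1$, i.e.\ $1\le\alpha<2$ with both $\alpha p,\alpha(1-p)<1$. In that regime the inequality is simply false: take $\alpha=1.01$, $p=\tfrac12$, $k=1$, so $a=b=0.505$ and $\phi(a)+\phi(b)=2(0.505)^2|\ln 0.505|\approx0.348$, whereas your claimed bound is $(1.01)^2\ln(1.01)+\tfrac{e^{-1}}{2}\approx0.010+0.184=0.194$. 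The appeal to ``monotonicity on $[1,\alpha]$'' for $\phi(\alpha-s)$ fails precisely because $\alpha-s<1$.

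The paper sidesteps this entirely by the elementary splitting $|\ln(\alpha p)|\le|\ln\alpha|+|\ln p|$, so that the two inner log-terms contribute $(\alpha p)^{k+1}|\ln p|+(\alpha(1-p))^{k+1}|\ln(1-p)|\le\alpha^{k+1}\big(p\cdot p^{k}|\ln p|+(1-p)\cdot(1-p)^{k}|\ln(1-p)|\big)\le\alpha^{k+1}\big(p e^{-1}+(1-p)e^{-1}\big)=\alpha^{k+1}e^{-1}$, using only $\max_{0<q<1}q^{k}|\ln q|=\frac{1}{ke}\le e^{-1}$. This gives the $2|\ln\alpha|+e^{-1}$ numerator directly, uniformly in $\alpha>0$, with no case analysis. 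Replacing your pairing argument by this one-line estimate closes the proof.
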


\begin{proof}
We simplify the integral term by using the series expansion of the exponential function.
$$    e^{-y} \int_0^{y} \ln |u| e^u du = \int_0^{y} \ln|u| e^{u-y} du = \sum_{n=0}^{+ \infty} \frac{1}{n!} \int_0^{y} \ln|u| (u-y)^n du $$
  $$  = \sum_{n=0}^{+\infty} \frac{1}{n!} \left( \left[\sum_{k=0}^{n} \binom{n}{k} \frac{(-1)^{n-k}}{k+1}\right] y^{n+1} \ln|y| - \left[\sum_{k=0}^{n} \binom{n}{k} \frac{(-1)^{n-k}}{(k+1)^2} \right]y^{n+1} \right) $$
Using relations $\sum_{k=0}^{n} \binom{n}{k} \frac{(-1)^{n-k}}{k+1} = \frac{(-1)^{n}}{n+1}$ and $\sum_{k=0}^{n} \binom{n}{k} \frac{(-1)^{n-k}}{(k+1)^2} = \frac{(-1)^{n}H_{n+1}}{n+1}$ we get:
    $$  e^{-y} \int_0^{y} \ln |u| e^u du  = -\sum_{n=1}^{+\infty} \frac{(-y)^n}{n!} \Big(\ln |y| - H_{n} \Big)\,.
$$
The derivation of the upper bound is exposed in \ref{app:upper}. 
\end{proof}

\begin{remark}
By taking the limit when $\alpha$ tends to zero or by considering $m_{(1)}(p)$ we obtain the classic solution $-2N (p \ln (p) + (1-p) \ln (1-p))$ corresponding to the time to fixation with no selection effect.
\end{remark}

\paragraph{Quality of the Approximate Solution} We compare the relative distance between the exact solution of the fixation time to both zero and one (see Theorem \ref{theorem1}) with our approximate solution. The relative distance is given by the following formula:
$$ D_{k}(p) = \frac{|m(p)-m_{(k)}(p)|}{m(p)}.$$ We denote $k_{0.01}$ the minimum number $k$ of elements in $m_{(k)}(p)$ of our approximate solution such that $D_k(p)<0.01$. Figure \ref{fig:approx} shows $k_{0.01}$ for a series of selection effect values (twenty $\alpha$ values evenly spaced within the interval [1,30] and forty  $\alpha$ values evenly spaced within the interval $[0.1,3]$) and initial probabilities (all values of $p$ within the interval $[0.05, 0.95]$ with a step of size 0.05). We observe that the $k_{0.01}$ increases with the selection effect value $\alpha$. At $\alpha = 0.1$ and marginally to $p$, $k_{0.01} = 1$.

In the literature the selection coefficient $\alpha$ is chosen less than $16$ in the simulation study of Kumira' seminal work \cite{kimura1969average} ($sN \in [0,8]$). It is also common in application studies to consider small selection effects $\alpha < 10$ \cite{Malaspinas2012}. This corresponds to an approximate solution with up to 40 terms valid at precision $0.01$ (see Figure \ref{fig:approx} panel A). It should be noted that considering large $N$ eventually leads to large~$\alpha$. For instance, in the experimental MS2 bacteriophage data \cite{bollback2008estimation} a reasonable set of $N$ is $1\times10^7$ to $2\times10^8$. The corresponding alpha then ranges from $0$ to $1\times10^9$. $k_{0.01}$ is computationally intractable for the upper limit of this range of $\alpha$. However, \cite{bollback2008estimation} reports estimation of this coefficient close to $1$. For $\alpha$ greater than $35-40$ the approximate and the exact solutions are computationally unstable and some new ideas has to be developed. Nevertheless, the need for large selection coefficients $\alpha$ is unclear in the literature as most papers rely on $N$ and $s$ with no understanding of the key role played by the product $sN$.




\begin{figure}[H]
    \centering
    \includegraphics[scale=0.5]{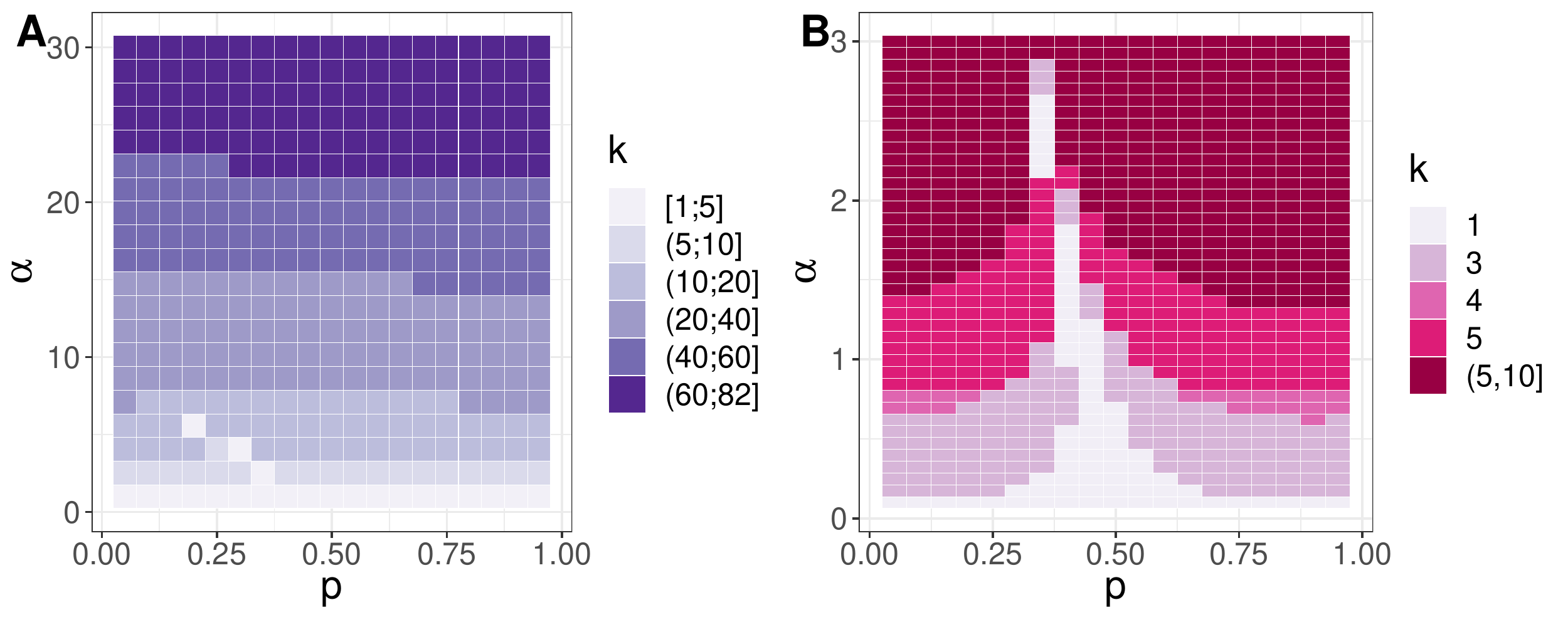}
    \caption{Heatmap of $k_{0.01}$ calculated on a series of selection effects $-$ \textbf{(A)} $\alpha \in [1,30]$ and \textbf{(B)} $\alpha \in [0.1,3]$ $-$ and initial probabilities (see \textit{Quality of the Approximate Solution}). Note that $k_{0.01}=2$ is not observed for these series of $\alpha$.}
    \label{fig:approx}
\end{figure}
\section{Simulation Study}
\label{sec:5}
\subsection{Fixation Time Distribution}
\label{simu1}
To give a better understanding of our problem, we have decided to present the empirical distribution of the fixation time using histograms, separating the two types of fixation (zero and one).
We simulated $10^5$ fixation times with population size $N = 10^4$ for three chosen selection effect values ($\alpha = 1, 5, 10$) and four chosen initial probabilities ($p = 0.1, 0.3, 0.5, 0.7$) in order to appreciate the shape of the fixation time distribution and its variation between fixations to zero and one. These results presented in Figure~\ref{fig:histo} show the impact of the selection effect reducing the probability fixation of allele loss (see function $1-u(\cdot)$). The exact theoretical distribution is not known but some results are known with no selection effect \cite{kimura1970length}. Note that the distribution of fixation time to one seems Log-symmetric independently of the chosen initial probability and selection effect value (see \ref{app:histo_log_scale}).  

\begin{figure}[!h]
\center\includegraphics[scale=0.36]{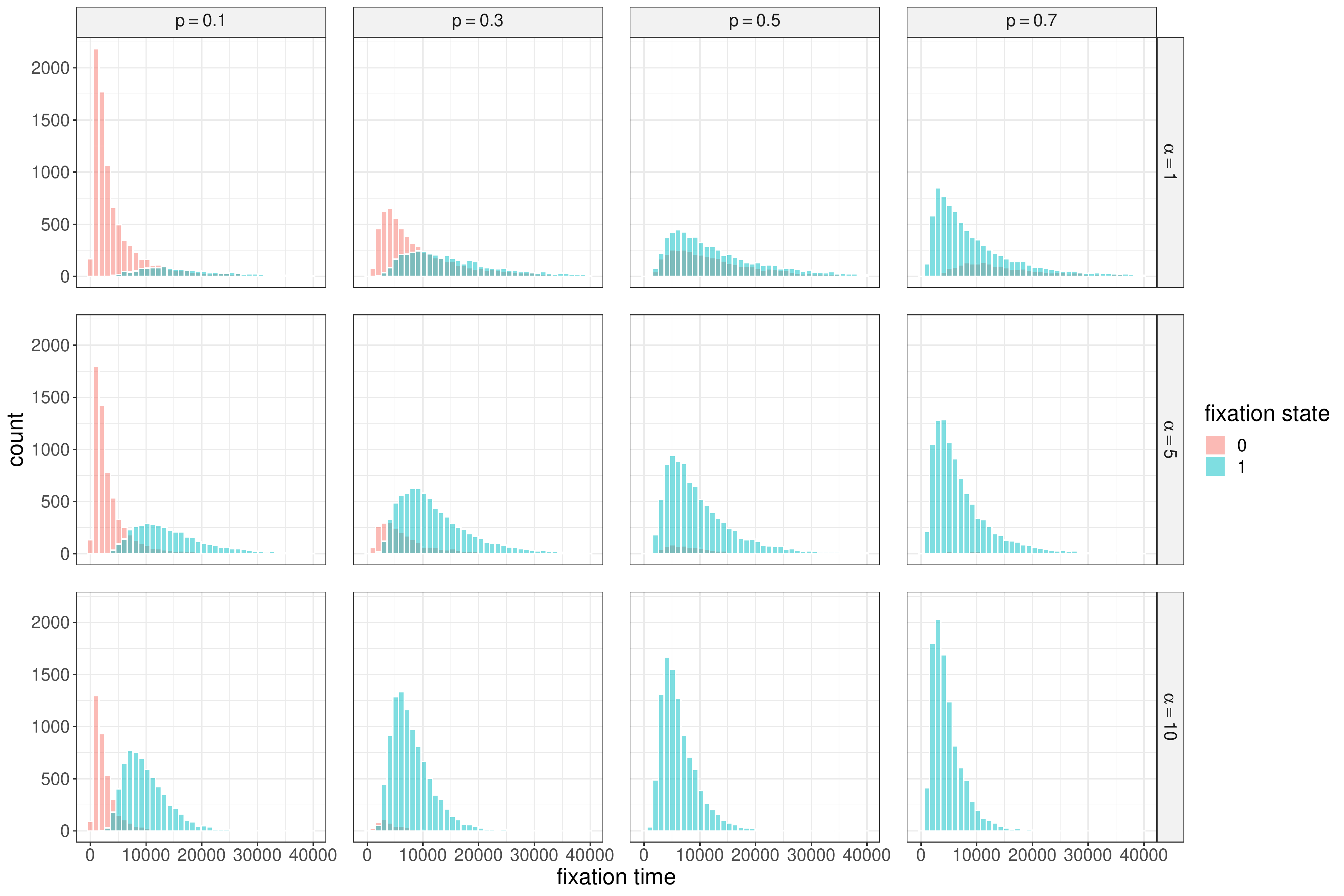} 
\caption{Histograms of relative fixation time to zero (in red) and fixation time to one (in blue). Each histogram represents the result of $10^4$ simulations with population size of $N = 10^4$. The histograms are truncated after a fixation time of 40000.}
\label{fig:histo}
\end{figure}

\subsection{Simulations With Respect to The Selection Coefficient, Population Size and Initial Probability}
\label{simu2}
\paragraph{Design of Simulations} We compare the empirical mean of the fixation time to zero, one and both, obtained by simulations against our exact solution in Theorem \ref{theorem1} (both) and Theorem \ref{theorem2} (zero or one). We test different population sizes $(N = 100, 200, 300, 1000)$, selection effect values (twenty $\alpha$ values evenly spaced within the interval $[1,30]$) and initial probabilities (all values of $p$ within the interval $[0.05,0.95]$ with a step of size $0.05$). For each combination of these parameters we simulated up to $10^5$ repetitions, interrupting the simulations once we get $10^4$ fixations at the desired state (one, zero or both). For some combinations of $\alpha$ and $p$ for fixation to zero, the number of repetitions was not enough to reach $10^4$ fixations.  In that case, we decide to not calculate the empirical mean and set the corresponding values to NA in Figure \ref{fig:relative_distance_sim_vs_analytic_solutions_and_variance_fixation}. This limitation is well understood by the theoretical probability of fixation to zero (function $1-u(\cdot)$).
\paragraph{Metric} The relative distance between the empirical mean of the fixation time to zero, denoted $\hat m_0(p)$, and the exact solution $m_0(p)$ is given by the following formula 
$$ d_{0}(p) = \frac{|\hat m_0(p)-m_0(p)|}{\hat m_0(p)}.$$ The same way we define $d_{1}(p)$ and $d(p)$, the relative distance involving the fixation time to one and both, respectively. 

\paragraph{Empirical Results} 

In Figure \ref{fig:distrib_relative_distance_sim_vs_analytic_solutions} we observe that for a population size $N=200$ or above, marginally to the selection effects and initial probabilities, the median of $d_1(p)$ is lower than 0.01, i.e 1\% of errors. For $d_0(p)$ and $d(p)$ the same statistical control is reached at population size $N=1000$ and $N=300$, respectively. At population size $N=100$ the distributions are flat which suggests a strong dispersion of relative distances. 
The largest observed relative distance is equal to $0.17$ and is obtained by $d_{0}(p)$ with values $\alpha = 28.47$, $p=0.05$ and $N=100$, leading to absolute values $\hat m_0(p) = 8.70$ and $m_0(p) = 10.18$.
For this combination of parameters the exact solution we propose is quite far from the simulation results. Taking now $\alpha = 2.53$, $p=0.05$ and $N=100$, $\hat m_0(p)$ and $m_0(p)$ are equal to $138.60$ and $138.48$, respectively. The corresponding relative distance $d_{0}(p)$ is equal to $10^{-3}$ which is a much better estimate. These first results convinced us to explore the relative distance according to the range of tested parameters ($\alpha$ and $p$). In Figure \ref{fig:relative_distance_sim_vs_analytic_solutions_and_variance_fixation} (panel A) we observe that the relative distance is driven by three main effects. (i) \textit{The population size $N$.} Consistent with Figure \ref{fig:distrib_relative_distance_sim_vs_analytic_solutions}, we observe that when $N$ increases, the relative distance decreases. (ii) \textit{The selection effect value $\alpha$.} We observe that when $\alpha$ increases, the relative distance increases. (iii) \textit{The side effect.} We observe that $d_0(p)$ and $d_1(p)$ are larger for small and large $p$, respectively. $d(p)$ is a combination of both $d_0(p)$ and $d_1(p)$. The side effect correlates well with the coefficient of variation (see Figure \ref{fig:relative_distance_sim_vs_analytic_solutions_and_variance_fixation} panel B). Hence, the increases of relative distances in these regions could be explained by an increased uncertainty on the estimate of $\hat m(p)$, $\hat m_0(p)$ and $\hat m_1(p)$.

\begin{figure}[H]
    \centering
    \includegraphics[scale=0.41]{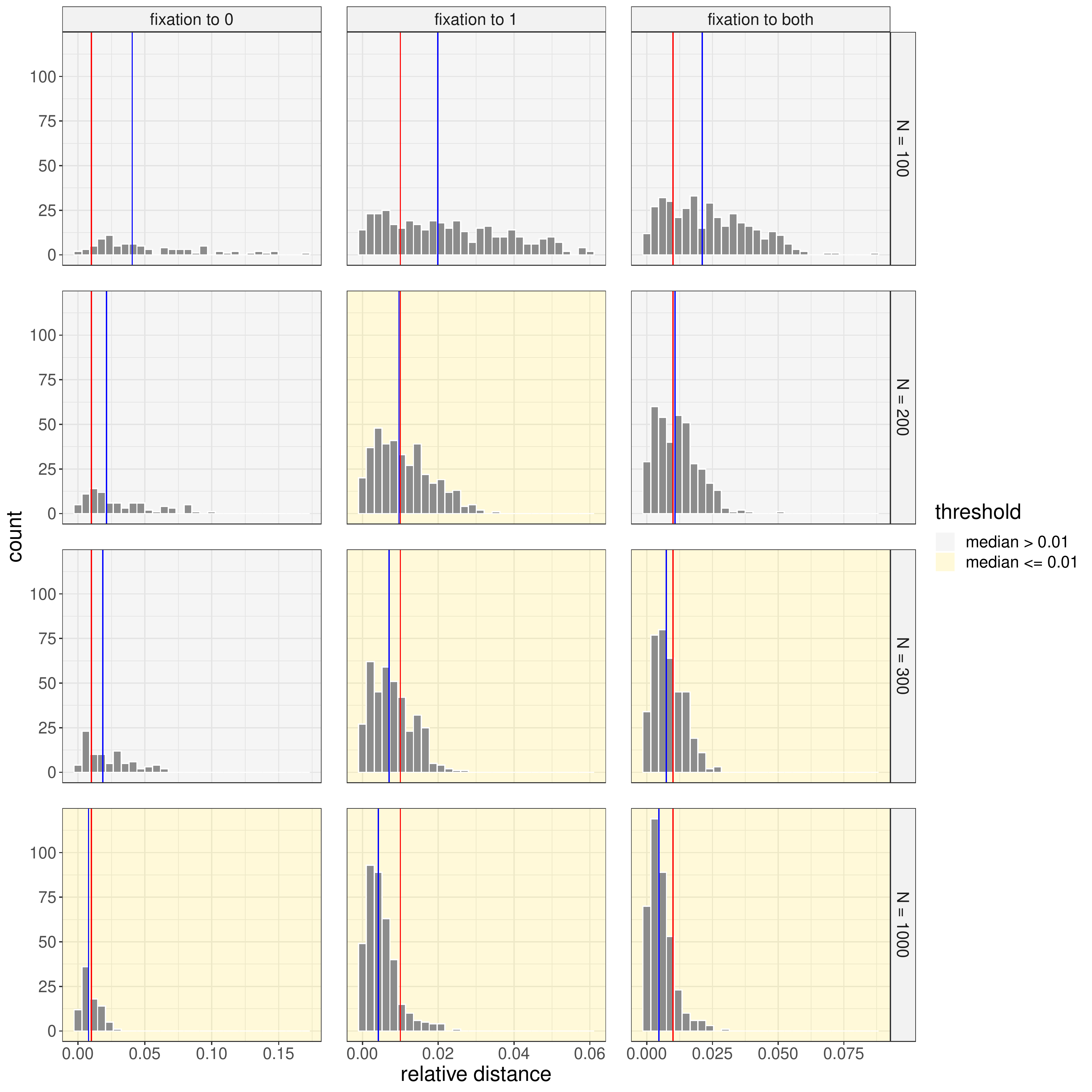}
    \caption{Histogram of relative distances $d_0(p)$, $d_1(p)$ and $d(p)$ calculated on a series of population sizes, selection effects and initial probabilities (see \textit{Design of Simulations}). When the median (blue vertical line) of the relative distance distribution is lower than 0.01 (red vertical line), i.e 1\% of errors, the track background color is set to yellow.}
    \label{fig:distrib_relative_distance_sim_vs_analytic_solutions}
\end{figure}
\begin{figure}[H]
    \centering
    \includegraphics[scale=0.39]{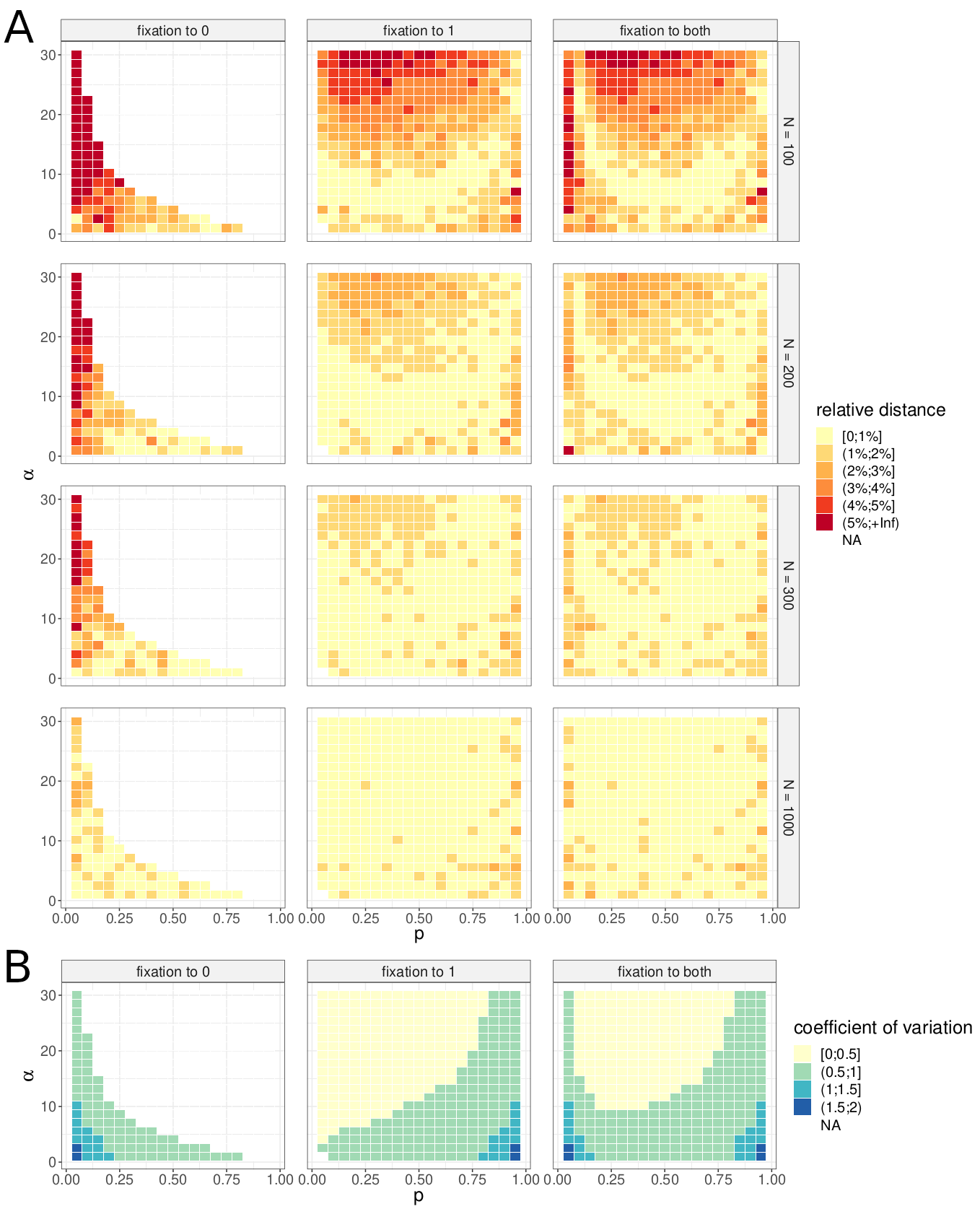}
    \vspace*{-6mm}
    \caption{\textbf{(A)} Heatmap of relative distances $d_0(p)$, $d_1(p)$ and $d(p)$ calculated on a series of population sizes, selection effects and initial probabilities (see \textit{Design of Simulations}). \textbf{(B)} Heatmap of the coefficient of variation estimated on $10^4$ observed fixations for the same series of parameters. The coefficient of variation is independent of the population size $N$ (see \ref{app:variance_fixation_sim}).}
    \label{fig:relative_distance_sim_vs_analytic_solutions_and_variance_fixation}
\end{figure}

\newpage
\section{Conclusion}

Exact and approximate formulas for time to fixation in Wright-Fisher problem with selection have been found. All exact formulas are derived by only using the law of total probability and Taylor expansions. The derivation of these results have then been simplified in comparison to the standard diffusion approximation method based on the Kolmogorov backward partial differential equation. Time to fixation to zero, one and both are only expressed in terms of elementary and exponential integral functions with respect to the initial allele $A$ proportion $p$ in segment $[0,1]$. For small selection effects an approximate solution is possible and highlight its link with the standard solution with no selection. A simulation study have been conducted to highlight the large validity range of our formulas, even for small population size. For future developments building estimators for the selection coefficient based on our exact results would be a possible next step. A better understanding of the higher moments for time to fixation to zero and one as the variance would lead to explicit confidence intervals for this model with selection.

\section*{Contributions and Acknowledgment}

Pauline Spinga has been working three months on this project during her undergraduate internship. She developed an R Shiny app comparing simulation results to the exact fixation time formula \url{https://github.com/PaulineSpinga/wrightfisher}. She also helped finding the approximate solution of Theorem \ref{theoremApprox} and made most of the literature review. Arnaud Liehrmann was responsible for the simulation study and co-developed its related R package \url{https://github.com/vrunge/WrightFisherSelection}. Vincent Runge supervised the project, found the theoretical results and wrote the manuscript. We also would like to thank our colleague Guillem Rigaill (INRAE researcher and Arnaud's PhD supervisor) for helpful comments concerning the simulation study.



\bibliography{mybibfile}

\appendix

\setcounter{figure}{0} 
\section{Differential Equations for Fixation Time}
\label{app:diffequation}

We describe the proof for the equation in $m(\cdot)$. All other results are proven the same way (we only have to change the constant term). Using a Taylor expansion, we get
\begin{equation}
    \begin{aligned}
    m(p)&= 1 + \int_{\Delta p}  \Prob_{p \rightarrow p+q} m(p + \Delta p)d(\Delta p)\\
    &=  1 + \int \Prob_{p \rightarrow p + \Delta p} \left (m(p) + \frac{m'(p)}{1!}\Delta p + \frac{m''(p)}{2!}\Delta p^2 + O(\Delta p^3)  \right )d(\Delta p)\\
      &\approx  1 + m(p) + \E[\Delta P]m'(p) + \frac{1}{2}\E[(\Delta P)^2]m''(p)\,.\\
    \end{aligned}
\end{equation}
With $\Delta P = (Y_{t+1}-Y_{t})\left.\right|_{Y_{t}=p}$, we have:
$$\E[Y_{t}] = p\,,\quad\E[Y_{t+1}] = \frac{(1+s)p}{1+sp}\,,\quad \E[\Delta P] =  \frac{sp(1-p)}{1+ps}\,,$$
and 
$$\var (\Delta P) = \var (Y_{t+1}) =  \frac{1}{N}\frac{(1+s)p}{1+sp}(1-\frac{(1+s)p}{1+sp}) = \frac{1}{N}\frac{(1+s)p(1-p)}{(1+sp)^2}\,,$$
$$\E[(\Delta P)^2] = \var (\Delta P) + \E[\Delta P]^2 = \frac{p(1-p)}{(1+ps)^2}\left [ \frac{1+s}{N}  + s^2 p(1-p) \right ]\,.$$
We derive approximate formulations for these two first moments by Taylor expansion in power of $\alpha/2N$ using relation $s = \frac{\alpha}{2N}$. We get:
$$
    \E[\Delta P] =
     p(1-p)\frac{\frac{\alpha}{2N}}{1+p\frac{\alpha}{2N}} \simeq p(1-p)\frac{\alpha}{2N}(1-p\frac{\alpha}{2N}) = p(1-p)\frac{\alpha}{2N} + o\Big(\frac{\alpha}{2N} \Big)\,.
$$
Similarly, 
 $$
    \begin{aligned}
   \E[(\Delta P)^2] &= \frac{p(1-p)}{N}  \frac{(1+\frac{\alpha}{2N}) }{(1+p\frac{\alpha}{2N})^2}   + o\Big(\frac{\alpha}{2N}\Big) \,\\
     &=\frac{p(1-p)}{N} (1+\frac{\alpha}{2N}) (1-2p\frac{\alpha}{2N})
     ) + o\Big(\frac{\alpha}{2N}\Big)= \frac{p(1-p)}{N}  + o\Big(\frac{\alpha}{2N}\Big)\,.
    \end{aligned}
$$
Removing the small order terms leads to the desired differential equation.

\newpage

\section{Approximate Solution Upper Bound}
\label{app:upper}

We recall that $\alpha$ is positive.
$$\frac{\alpha}{2N}| m(p)- m_{(k)}(p)| = \Big| \Big[\frac{1-e^{- \alpha p}}{1 - e^{- \alpha x}}\Big(-\sum_{n=k+1}^{+\infty} \frac{(-\alpha x)^n}{n!} \Big(\ln |\alpha x| - H_{n} \Big) \Big)\Big]_{x=p}^{x=1}$$
$$-  \Big[\frac{1-e^{ \alpha (1-p)}}{1 - e^{\alpha x}}\Big(-\sum_{n=k+1}^{+\infty} \frac{(\alpha x)^n}{n!} \Big(\ln |\alpha x| - H_{n} \Big) \Big)\Big]_{x=1-p}^{x=1}\Big| $$
$$\le \sum_{n=k+1}^{+\infty} \frac{\alpha^n}{n!} \Big( |\ln \alpha| + H_{n} \Big) + \sum_{n=k+1}^{+\infty} \frac{(\alpha p )^n}{n!} \Big( |\ln \alpha p| + H_{n} \Big)$$
$$+ (1-p)\sum_{n=k+1}^{+\infty} \frac{\alpha ^n}{n!} \Big( |\ln \alpha | + H_{n} \Big) + \sum_{n=k+1}^{+\infty} \frac{(\alpha (1-p) )^n}{n!} \Big( |\ln \alpha (1-p)| + H_{n} \Big)\,,$$
applying triangle inequalities and using relations $\frac{1-e^{- \alpha p}}{1 - e^{- \alpha}} \le 1$, $\frac{1-e^{ \alpha (1-p)}}{1 - e^{\alpha}} \le 1-p$. With the loose upper bound $H_n \le n$ and $\sum_{n=k+1}^{+\infty} \frac{z^n}{n!} \le e^z\frac{z^{k+1}}{(k+1)!}$ for positive $z$, we get:
$$\frac{1}{2N}| m(p)- m_{(k)}(p)| \le \alpha^{k} \Big[ \frac{e^{\alpha}}{k!} \Big(\frac{|\ln \alpha|}{k+1} + 1 \Big)+$$
$$ \frac{e^{\alpha p} p^{k+1}}{k!} \Big(\frac{|\ln \alpha p|}{k+1} + 1 \Big)+\frac{e^{\alpha(1-p)}(1-p)^{k+1}}{k!} \Big(\frac{|\ln \alpha(1-p)|}{k+1} + 1 \Big)\Big]$$
$$\le  \frac{e^{\alpha}\alpha^{k}}{k!} \Big(\frac{|\ln \alpha|+p^{k+1}|\ln \alpha p|+(1-p)^{k+1}|\ln \alpha (1-p)|}{k+1} + 3 \Big)\,,$$
using the upper bound $e^{\alpha}$ for both $e^{\alpha p}$ and $e^{\alpha (1-p)}$.
Eventually, we have:
$$p^{k+1}|\ln p| \le p p^{k} |\ln p| \le \frac{p}{Me} \le p e^{-1}  \,,$$ and we get
$$| m(p)- m_{(k)}(p)| \le  2N \frac{e^{\alpha}\alpha^{k}}{k!} \Big(\frac{2|\ln \alpha|+  e^{-1}}{k+1} + 3 \Big)\,.$$

\newpage

\section{Fixation time distribution on the $\log$ scale.}
\label{app:histo_log_scale}
\begin{figure}[!h]
\center\includegraphics[scale=0.36]{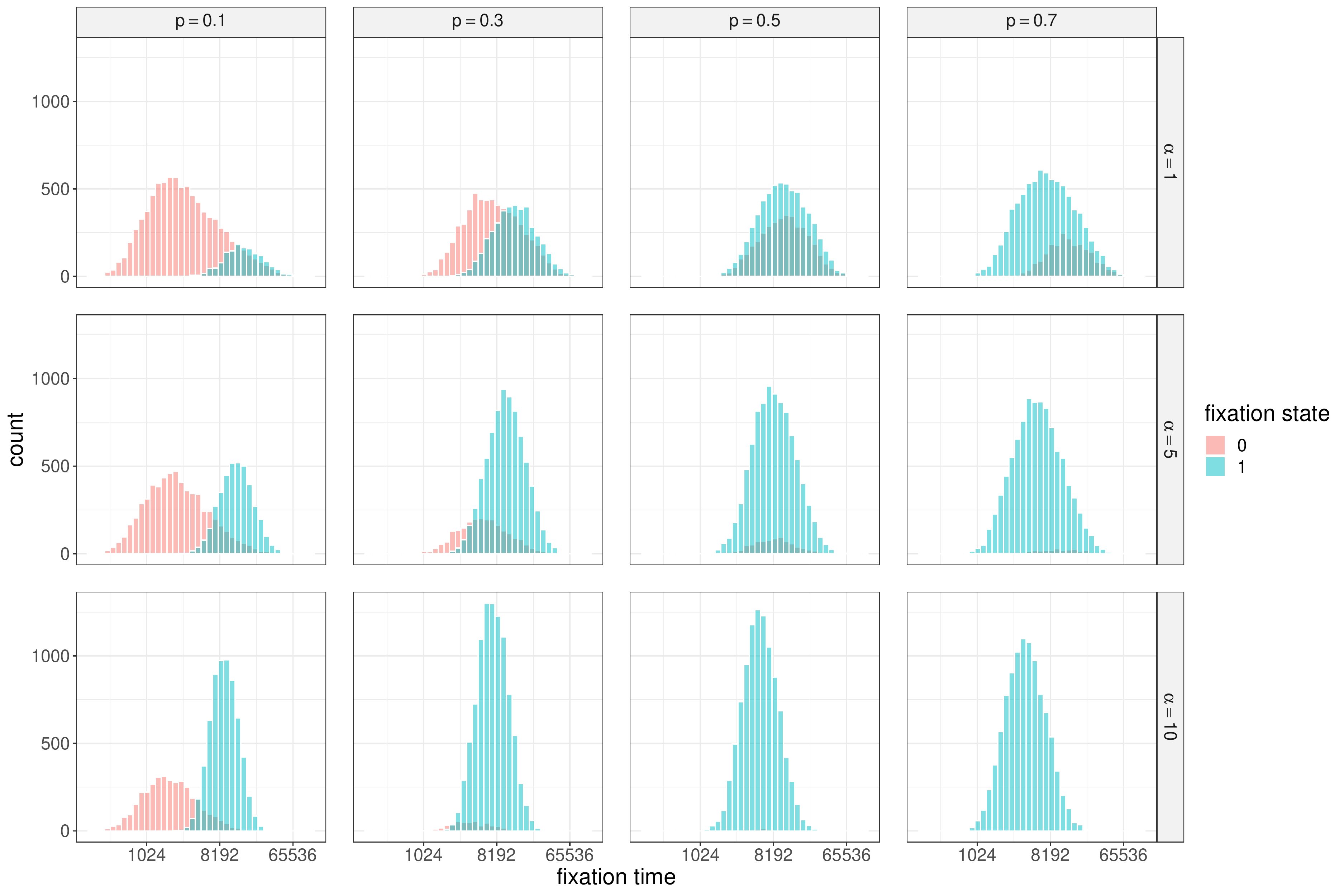} 
\caption{\textit{The distribution of fixation time to one seems Log-symmetric.} Histograms of relative fixation time to zero (in red) and fixation time to one (in blue) on the log scale. Each histogram represents the result of $10^4$ simulations with population size of $N = 10^4$.}
\end{figure}

\newpage

\section{Coefficient of variation is independent of the population size $N$}
\label{app:variance_fixation_sim}
\begin{figure}[!h]
    \centering
    \includegraphics[scale=0.45]{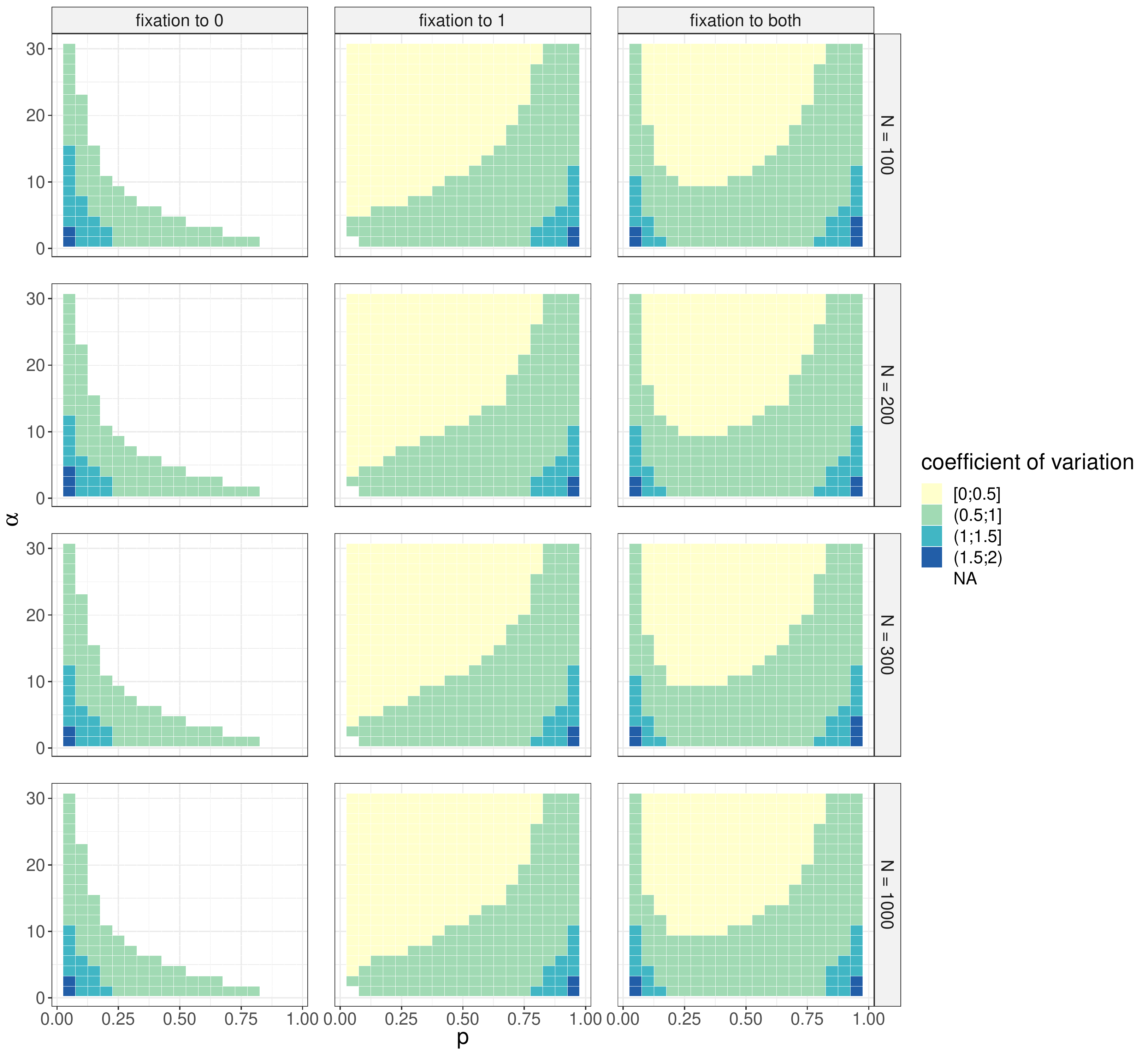}
    \caption{\textit{The coefficient of variation is independent of the population size N.} Heatmap of the coefficient of variation estimated on $10^4$ observed fixations for a series of population sizes, selection effects and initial probabilities.}
\end{figure}

\end{document}